\newcommand\blue[1]{#1}
\theoremstyle{plain}
\newtheorem{theorem}{Theorem}[section]
\newtheorem{proposition}[theorem]{Proposition}
\theoremstyle{definition}
\newtheorem{definition}[theorem]{Definition}
\theoremstyle{remark}
\icmltitlerunning{
Bottleneck-Minimal Indexing for Generative Document Retrieval
}
\newcommand\doc{x}
\newcommand\query{q}
\newcommand\idx{t}
\newcommand\docv{\mathbf{\doc}}
\newcommand\dspace{\mathscr{X}}  
\newcommand\dvar{X}
\newcommand\qvar{Q}
\newcommand\idvar{T}
\newcommand\docs{\mathcal{X}}
\newcommand\queries{\mathcal{Q}}
\newcommand\ids{\mathcal{T}}
\newcommand\alphbet{V}
\begin{document}

\twocolumn[
\icmltitle{
Bottleneck-Minimal Indexing for Generative Document Retrieval
}

\icmlsetsymbol{equal}{*}

\begin{icmlauthorlist}
\icmlauthor{Xin Du}{equal,wise}
\icmlauthor{Lixin Xiu}{equal,utokyo}
\icmlauthor{Kumiko Tanaka-Ishii}{fsci}
\end{icmlauthorlist}

\icmlaffiliation{fsci}{Department of Computer Science and Engineering, Waseda University}
\icmlaffiliation{wise}{Waseda Research Institute for Science and Engineering, Waseda University}
\icmlaffiliation{utokyo}{Department of Mathematical Informatics, The University of Tokyo}

\icmlcorrespondingauthor{Xin Du}{duxin@aoni.waseda.jp}
\icmlcorrespondingauthor{Kumiko Tanaka-Ishii}{kumiko@waseda.jp}

\icmlkeywords{Machine Learning, ICML}

\vskip 0.3in
]

\printAffiliationsAndNotice{\icmlEqualContribution}  

\begin{abstract}
We apply an information-theoretic perspective to reconsider generative
document retrieval (GDR), in which a document $\doc \in \docs$ is
indexed by $\idx \in \ids$, and a neural autoregressive model is
trained to map queries $\queries$ to $\ids$. GDR can be considered to
involve information transmission from documents $\docs$ to queries
$\queries$, with the requirement to transmit more bits via the indexes
$\ids$. By applying Shannon's rate-distortion theory, the optimality
of indexing can be analyzed in terms of the mutual information, and
the design of the indexes $\ids$ can then be regarded as a {\em
bottleneck} in GDR. After reformulating GDR from this perspective, we
empirically quantify the bottleneck underlying GDR. Finally, using the
NQ320K and MARCO datasets, we evaluate our proposed bottleneck-minimal
indexing method in comparison with various previous indexing methods,
and we show that it outperforms those methods.
\end{abstract}

\section{Introduction}

\begin{figure}[t]
\centering
\includegraphics[width=0.9\linewidth]{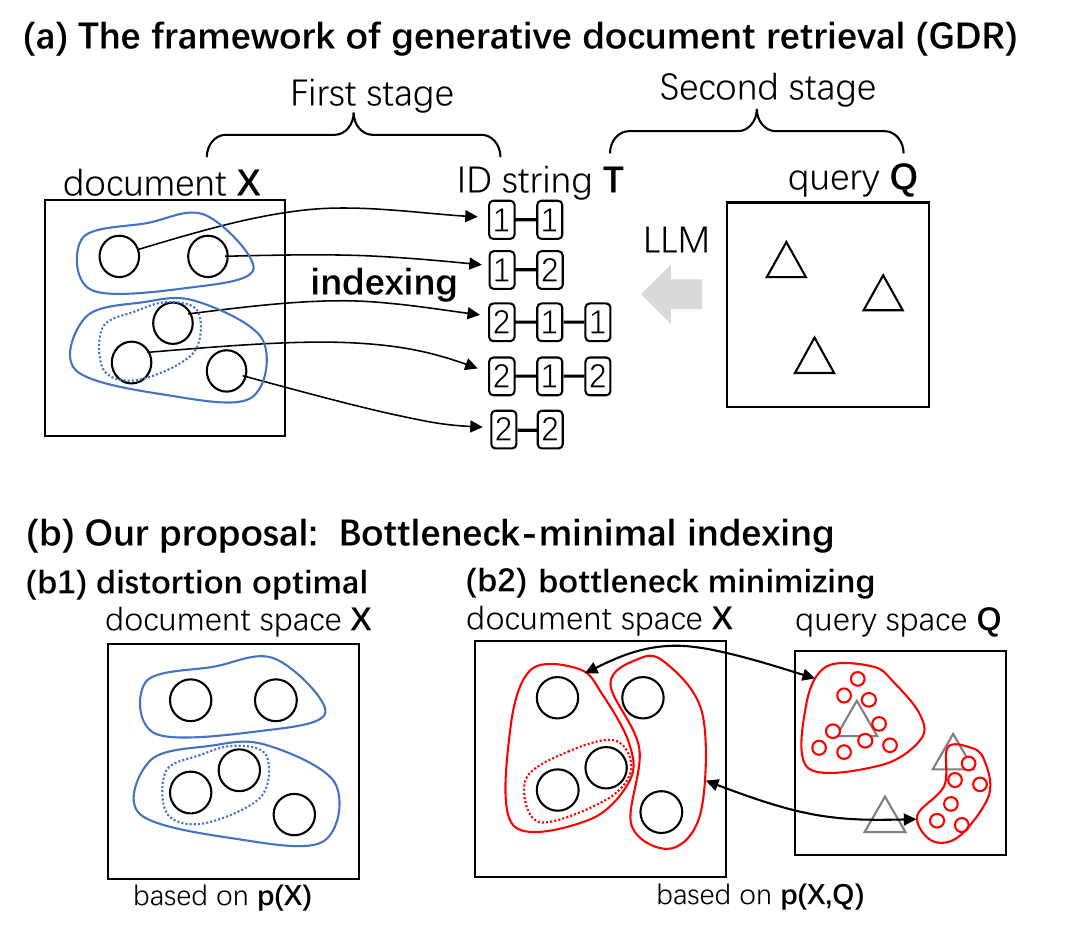}
\caption{(a) Generative document retrieval (GDR) framework. (b) Our contribution using bottleneck-minimal indexing: (b-1) distortion-optimal indexing for documents $\docs$; (b-2) optimal indexing for both documents and queries $\queries$.
\label{fig:overview}}
\vskip-1em
\end{figure}

The importance of accurate document retrieval is increasing,
especially with the limitation of recent large language models.
Generative document retrieval (GDR) is a new,
promising framework for information
retrieval. First, every document $\doc \in
\docs$ (a document set) is represented by a short, distinct
identifier string $\idx \in \ids$ (the identifier set). Then, an
autoregressive model, typically a neural network, is trained to map a
query $\query\in \queries$ (a query set) to an identifier string
$\idx$ representing a document $\doc$. Figure 1(a) schematically
illustrates GDR.

GDR thus typically involves two stages, and the main research question has been about how to acquire good identifier strings $\ids$ for documents $\docs$ in the first stage. Previously, \citet{tay2022dsi} clustered $\docs$ according to vectorial embeddings generated by the BERT model \citep{devlin2019bert}, by using the $k$-means algorithm. \citet{bevilacqua2022autoregressive} used a substring of each document as its ID string. All of those works were based on good intuition about how to semantically partition $\docs$, but we can theoretically reconsider how best to partition it in terms of $\ids$.

For such reconsideration, it is natural to apply the rate-distortion theory initiated by Shannon \citep{shannon1948mathematical,shannon59,berger68}, which is based on the mutual information. Let $\dvar$, $\idvar$, and $\qvar$ denote random variables defined over $\docs$, $\ids$, and $\queries$, respectively. As the rate-distortion theory only involves two terms corresponding to $\dvar$ and $\idvar$, the aim, in short, is to acquire $\idvar$ that minimally distorts $\dvar$. All previous GDR proposals thus consider its distortion optimality, as illustrated in Figure 1(b-1).

In this paper, we deal with the particularity of the second stage, which involves mapping $\qvar$ to $\idvar$. The design of $\idvar$, which we call {\em indexing} here, was previously considered only with respect to $\dvar$, but the best $\idvar$ can theoretically be obtained by considering both $\dvar$ and $\qvar$. For this purpose, we apply the information bottleneck theory by \citet{tishby2000ib}, an extension of rate-distortion theory.

Under this bottleneck theory, GDR involves information transmission from documents to queries, with the requirement to transmit more bits via $\idvar$. Obviously this suggests a tradeoff between the two relations $\idvar\leftrightarrow \dvar$ and $\idvar \leftrightarrow \qvar$, and $\idvar$ then becomes a {\em bottleneck} of $\qvar$ in searching $\dvar$. Our paper's main contribution is to provide a theoretical formalization to evaluate the quality of $\idvar$ in terms of not only $\dvar$ but also $\qvar$.

The bottleneck suggests that the optimal $\idvar$ is determined by the probability distribution of queries rather than documents, because $\qvar$ is posterior to $\dvar$. This suggests a novel design for $\idvar$ by clustering $\qvar$ rather than $\dvar$. We experimentally evaluate this approach's effectiveness in comparison with previous methods, including $k$-means \citep{hartigan1979algorithm} and locality-sensitive hashing \citep{datar2004locality}. The results show that our idea indeed improves the Recall@1 score on the NQ320K and MARCO Lite datasets, by 1.26 and 3.72 points, respectively, with a finetuned \texttt{T5-base} model. The margins increase greatly to 7.06 and 6.45 points, respectively, for \texttt{T5-mini}, which has fewer parameters.
The code is available at \url{https://github.com/kduxin/Bottleneck-Minimal-Indexing}.

\section{Related Works}
We start with an overview of the recent progress in GDR in Section
\ref{sec:rw-gdr}. In Section \ref{sec:rw-reprlearn}, we briefly
summarize methods in discrete representation learning, which provide
options for indexing methods in GDR. In Section
\ref{sec:rw-distortion}, we introduce the historic rate-distortion
theory initiated by Shannon and the bottleneck theory by
\citet{tishby2000ib}.

\subsection{Generative Document Retrieval}
\label{sec:rw-gdr}
Conventional document retrieval methods use a score function to
quantify the degree of relevance between a document and a query. Score
functions have been manually chosen and include the Hamming distance,
\citep{salakhutdinov2009semantic}, TF-IDF
\citep{manning1999foundations}, BM25 score
\citep{robertson2009probabilistic}, and cosine similarity between
vectoral embeddings of texts
\citep{mikolov2013distributed,devlin2019bert,karpukhin2020dense,khattab2020colbert}.
Alternatively, score functions can also be learned from data, e.g.,
based on a learning-to-rank loss function
\citep{burges2005learning,cao2007learning,li2023learning}.

In contrast, generative document retrieval aims to directly generate a
document $\doc$ from a query $\query$ by using a sequence-to-sequence
model. In relation to indexing, \citet{cao2021autoregressive} proposed
the concept of the entity retrieval task, in which a model generates
an {\em entity} comprising either words, phrases, or article titles.
While documents are typically long, with hundreds of words, an entity
is short, and their work showed the potential effectiveness of using
entities as indexes. Since then, several works have studied generation
of short textual summaries to represent long documents
\citep{bevilacqua2022autoregressive,chen2022corpusbrain}. Using short
texts as indexes was also explored for recommendation tasks
\citep{geng2022recommendation}.

\citet{tay2022dsi} generalized this idea as a {\em differentiable
search index} (DSI), a more straightforward indexing method to obtain
document ID strings that are acquired by applying hierarchical
$k$-means clustering to a set of document vectoral embeddings produced
by a BERT model \citep{devlin2019bert}. In DSI, a document set $\docs$
is partitioned into clusters that are organized hierarchically (by
having subclusters in each cluster). An identifier number represents
each cluster, and a document is identified by a sequence of
identifiers $\idx$ representing the hierarchical clusters. This simple
method was further explored in
\citet{wang2022nci,mehta2023dsi,tang2023semantic,zeng2023scalable,rajput2023recommender}.
Recent works \citep{sun2023learning,jin2023language} proposed to learn
the index $\ids$ by using an autoencoder model with $\docs$.

While all these previous works consider organization of $\docs$ to design $\ids$, our contribution is to also include $\queries$ in the designed approach.

\subsection{Discrete Representation Learning}
\label{sec:rw-reprlearn}
Acquisition of $\ids$ can be considered as one kind of representation learning. Learning of discrete representations has been studied in multiple domains, as summarized below.

\paragraph{Vector quantization (VQ).} VQ encodes vectors in a Euclidean space as discrete codes with the least distortion. \citet{lloyd1982least,wu2019vector} recognized that the optimal vector quantization corresponds to the $k$-means clustering algorithm's results when the distortion is defined via the Euclidean distance. A Euclidean space can be partitioned by assuming different structures, such as a tree \citep{nister2006scalable}, in which case the approach corresponds to hierarchical $k$-means clustering. Deep learning techniques can be combined to learn better vector quantization \citep{oord2017vqvae}.

\paragraph{Hash-based methods.} A hash function maps vectors to discrete classes. Locality-sensitive hashing (LSH) \citep{datar2004locality} uses a property of $p$-stable distributions to construct a family of hash functions that is guaranteed to map close vectors to the same class. By using multiple independent hash functions, the dissimilarity of two vectors is measured by the Hamming distance. Semantic hashing \citep{salakhutdinov2009semantic} learns hash functions by using an autoencoder to acquire compact discrete representations. Many works since have studied the use of deep neural networks as hash functions \citep{venkateswara2017deephn,chaidaroon2017variational,jin2019unsupervised,lin2022deep}.

\subsection{Rate-Distortion Theory}
\label{sec:rw-distortion}
Rate-distortion (RD) theory \citep{shannon1948mathematical} provides an information-theoretic framework for optimal indexing in GDR. RD theory is based on $I(\dvar; \idvar)$, the mutual information for transmitting data between $\dvar$ and $\idvar$. Such transmission introduces distortion that is measured by a distortion function $d(\dvar,\idvar) \mapsto\mathbb{R}_+$. This function depends on the application, and the choice is sometimes not obvious \citep{slonim2002information}. The overall distortion $\mathbb{E}_{\dvar,\idvar} d(\dvar,\idvar)$ has a tradeoff with the transmission rate $I(\dvar; \idvar)$. For $p(\idvar|\dvar)$, the probability of $\idvar$ given $\dvar$, let $p^*(\idvar|\dvar)$ be the probability when $\idvar$ is optimal for $\dvar$ via RD theory. In GDR, this $\idvar$ would be considered the optimal indexing of documents.

The vector quantization problem mentioned in Section
\ref{sec:rw-reprlearn} is closely related to RD theory
\citep{gray1984vector}. RD theory can be extended to include a third
variable $\qvar$ via the information bottleneck (IB) theory
\cite{tishby2000ib}, and the distortion function is implicitly
determined via the mutual information $I(\idvar;\qvar)$. Here,
$\idvar$ is called an {\em information bottleneck} (IB) because it
determines the ``data transmission rate'' between $\dvar$ and $\qvar$.

The IB theory was originally proposed for bottom-up hierarchical (i.e., agglomerative) clustering \citep{slonim1999agg} applied to words \citep{slonim2000document}. It has also been applied to explain the superior generalizability of feedforward deep neural networks \citep{tishby2015deep,saxe2019information}.

We adopt IB theory to analyze GDR. The relation between documents $\dvar$ and queries $\qvar$ can be complex, especially in cross-lingual \citep{grefenstette2012cross} or multimodal \citep{gabeur2020multi} scenarios. Hence, the distortion function $d(\dvar,\idvar)$ must involve the queries $\qvar$, which naturally leads to the IB theory.

\section{Bottleneck-Minimal Indexing}
\subsection{Mathematical Setting}

As mentioned above, in the GDR context, each document $\doc\in\docs$
is indexed with an ID string $\idx\in\ids$, described by a function
$f(\doc)=\idx$. It is anticipated that information loss occurs during
the application of $f$, as the semantic associations between documents
are simplified into discrete ID strings. To quantify this information
loss, the domain of $f$ must be considered within an abstract semantic
space of documents, denoted as $\dspace$. Here, a document is a point
in $\dspace$, and the set $\docs$ of all documents in a dataset is a
subset of $\dspace$, i.e., $\doc\in\docs\subset\dspace$. For
convenience, $\qvar$ is defined over the same semantic vector space
$\dspace$, and $\idvar$ is defined over $\ids$, the set of all ID
strings.

The indexing function $f$ is formally defined as $f:\dspace\to\ids$.
By leveraging $f^{-1}: \idx\mapsto \dspace_\idx
\subset \dspace$, the semantic space $\dspace$ can be split into
multiple regions. According to the standard GDR setting
\citep{tay2022dsi}, it is required that $|\ids|=|\docs|$; every
$\idx\in\ids$ is associated with a distinct document $\doc$ in
$\docs$, the only document in $f^{-1}(\idx)\cap\docs$. When
$|\ids| < |\docs|$, indicating that an ID string is associated with
multiple documents, this configuration typically aligns with
semantic-hash methods where hash collison is desirable
\citep{salakhutdinov2009semantic}.



\subsection{Distortion Optimality}
\label{sec:ib}
\label{sec:beta}
Previous works considered the design of $\ids$ only with respect to
$\docs$. As mentioned above, Shannon's rate-distortion theory can be
applied to acquire $\ids$ as an optimal split of $\dspace$,
formulated as:
\begin{equation}
    \begin{aligned}
        \min_{p(\idvar\mid\dvar)} ~~~& I(\dvar; \idvar), \\
    \end{aligned}
\end{equation}
where $I(\dvar;\idvar)$ denotes the mutual information of $\dvar$ and
$\idvar$. This formalization allows for a reconsideration of the
optimal $\idvar$ given $\dvar$ for previous works, as will be
analyzed in Section \ref{sec:comparison}.


Information retrieval involves another term, $\qvar$, and the optimization can be extended as follows:
\begin{equation}
    \begin{aligned}
        \min_{p(\idvar\mid\dvar)} ~~~& I(\dvar; \idvar) \\
        \text{s.t.} ~~~& I(\idvar; \qvar) \geq \varepsilon,
    \end{aligned}
    \label{eq:optim}
\end{equation}
where $\varepsilon>0$ is a threshold. This additional constraint is
essential, as $I(\dvar; \idvar)=0$ when $\idvar$ is constant, but it
also decreases $I(\idvar; \qvar)$.
The mutual information $I(\idvar;\qvar)$ is closely connected to the
likelihood $p(\idvar|\qvar)$ that reflects retrieval accuracy;
therefore, having $I(\idvar;\qvar)$ in the optimization problem
facilitates optimization of retrieval accuracy.

Formula (\ref{eq:optim}) characterizes the tradeoff between index
conciseness, denoted by $I(\dvar;\idvar)$, and retrieval accuracy,
linked to $I(\idvar;\qvar)$. This dual perspective highlights the
superiority of one indexing method over another: for a fixed
$I(\dvar;\idvar)$, aiming for a higher $I(\idvar;\qvar)$ improves
retrieval accuracy. Conversely, with a constant $I(\idvar;\qvar)$, the
emphasis shifts towards more compact indexes, encouraging the sharing
of semantics across the indexes.

This optimal split is equivalently reformulated via a Lagrangian
$\mathcal{L}$ as follows:
\begin{equation}
    \mathcal{L}(p(\idvar | \dvar)) =
    I(\dvar; \idvar) - \beta I(\idvar; \qvar),
    \label{eq:ib}
\end{equation}
where $\beta$ is the Lagrange multiplier. This coefficient $\beta$
determines the relative importance of $I(\idvar; \qvar)$ in
$\mathcal{L}$. When $\beta\to 0$, the latter term can be ignored, and
$T$ shrinks to a single point so that the former term decreases to 0.
Conversely, as $\beta$ becomes large, $\idvar$ tends to simply copy
$\dvar$. Because the possibilities for $\dvar$ are observed via the
subset $\docs$, a large $\beta$ corresponds to observing more
documents.

To summarize, $\beta$ is a hyperparameter related to the document
set's size, $|\docs|$. This is natural because the optimal indexing
$p^*(\idvar|\dvar)$ would change if the number of documents
increases. Nevertheless, some indexing method is optimal with respect
to different $\beta$ under certain assumptions, as we will show in
Section \ref{sec:comparison}.

\subsection{GDR Bottleneck}
\label{sec:bottleneck-gdr}
We now examine the tradeoff, or bottleneck, between $I(\dvar;\idvar)$
and $I(\idvar;\qvar)$. \citet{tishby2015deep} showed that the
Lagrangian in Formula (\ref{eq:ib}) is equivalent to the following:
\begin{equation}
    \mathcal{L}(p(\idvar | \dvar)) = I(\dvar; \idvar) + \beta I(\dvar; \qvar | \idvar) + \text{constant}.
    \label{eq:ib2}
\end{equation}
The equivalence requires assuming the Markovian relation $\idvar\leftrightarrow \dvar\leftrightarrow \qvar$, i.e., $p(\idvar | \dvar, \qvar) = p(\idvar | \dvar)$, and Appendix \ref{sec:mi-equality} summarizes the proof.

In this formulation, the conditional mutual information $I(\dvar; \qvar | \idvar)$ quantifies the information distortion due to $\idvar$. $I(\dvar; \qvar)$ is constant and unrelated to the indexing $\idvar$. Hence, a small $I(\dvar; \qvar | \idvar)$ means a good $\idvar$ for representing the joint distribution $p(\dvar,\qvar)$.

\begin{figure}[t]
    \centering
    \includegraphics[width=0.7\linewidth]{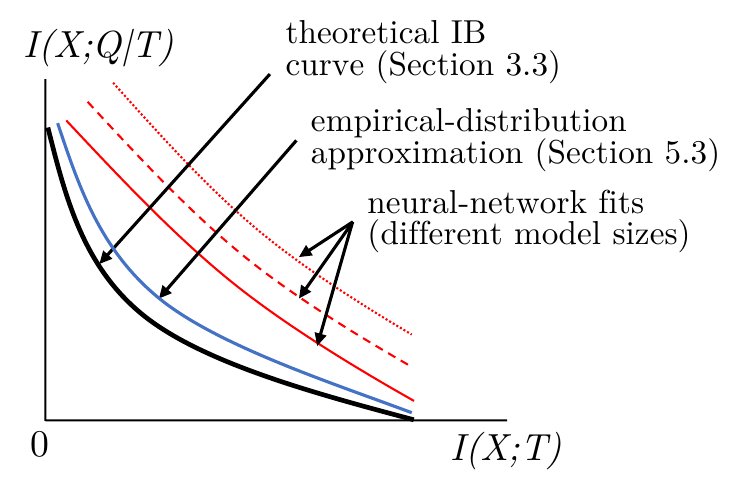}
\vskip-1.0em
 \caption{Bottleneck curves \label{fig:bottleneck-curve} }
 \vskip-1.0em
\end{figure}

As mentioned above, \citet{tishby2000ib} defined this tradeoff as an {\em information bottleneck}. The bottleneck can be visualized by a curve in the space of the two terms of Formula (\ref{eq:ib2}), as shown in Figure \ref{fig:bottleneck-curve}. The lower bound of $I(\dvar;\qvar|\idvar)$ captures the optimal indexing, and it monotonically decreases as $I(\dvar;\idvar)$ increases. We will empirically show in Section \ref{sec:bottleneck-experiment} that this bottleneck curve indeed occurs for GDR.

\subsection{A Theoretical Solution to Optimality}

\citet{tishby2000ib} showed that a stationarity point of the
Lagrangian $\mathcal{L}$ in Formula (\ref{eq:ib2}) must satisfy the
following equation:
\begin{equation}
    p^*(\idvar | \dvar) = \frac{p^*(\idvar)}{Z(\dvar,\beta)}
    \exp\Bigl(
        -\beta \text{KL}\bigl[
            p(\qvar | \dvar) \bigm\lVert p(\qvar | \idvar)
            \bigr]
        \Bigr),
    \label{eq:ibsolution}
\end{equation}
where $Z(\dvar,\beta)$ is a probability normalization term,
$p^*(\idvar)=\mathbb{E}_\dvar[p^*(\idvar|\dvar)]$. The
Kullback-Leibler divergence term reflects the information distortion
caused by $f:\doc\mapsto\idx$ in terms of the discrepancy between
$p(\qvar|\doc)$ and $p(\qvar|\idx)$.




In GDR, we pursue an indexing $f:\doc\mapsto\idx$ that is consistent
with this optimal solution $p^*(\idvar|\dvar)$. That is, the
identifier string $\idx=f(\doc)$ for a document $\doc$ should be
drawn from this ``best'' probability distribution
$p^*(\idvar|\dvar=\doc)$. A natural way to incorporate this intuition
is to evaluate $f$ by a likelihood function $p(\docs,\queries|f)$
defined as follows.

\begin{definition}
    $f:\docs\to\ids$ is called a {\em bottleneck-minimal indexing}
    (BMI) if it maximizes the likelihood function as follows:
    \begin{equation}
        p(\docs,\queries | f)
        \equiv\prod_{\doc\in\docs}
        p^*\bigl(\dvar=\doc \bigm| \idvar=f(\doc)\bigr).
        \label{eq:likelihood}
    \end{equation}
    where $p^*$ is given by Formula (\ref{eq:ibsolution}).
    \label{def:bmi}
\end{definition}

Definition \ref{def:bmi} is presented in a general manner,
intentionally avoiding assumptions about any specific distribution
family for $p(\qvar|\dvar)$ and $p(\qvar|\idvar)$ as outlined in
Formula (\ref{eq:ibsolution}). In Section \ref{sec:indexes}, we will
examine some existing indexing methods regarding their relations with
BMI. Under this background, a new indexing method will be introduced
in Section \ref{sec:method}.






\section{Indexing Methods}
\label{sec:comparison}
\label{sec:indexes}
In the following, we consider $\dspace$ as the semantic vector space produced by a pretrained BERT model. In this section we introduce different indexing methods under this perspective and compare them theoretically, while Section \ref{sec:compare} describes our experimental comparison.

An index $\idx \in \ids$ is represented as a sequence of elements of
{\em alphabet}, denoted as $\alphbet$. An ID string of length $m$ is
represented as $\idx=[\idx_1,\cdots,\idx_m]$, where
$\idx_i\in\alphbet$ is the string's $i$-th ``digit'' for
$i=1,\cdots,m$.

\subsection{Hierarchical Random Indexing (HRI)}
\label{sec:hri}
We consider the most basic method, random indexing. For each document,
an ID digit is randomly selected according to some prior distribution
$p(\idvar)$. $\docs$ is partitioned into $|V|$ subsets. Furthermore,
each subset is recursively partitioned for representation by $V$. The
recursive subdivision constitutes a {\em hierarchy} of subdivisions of
depth $m$. A special case arises when $|\alphbet|\geq|\docs|$ and
$m=1$; such indexing is termed {\em atomic} \citep{tay2022dsi}, since
$\ids$ lacks hierarchical structure.

A representative partition is obtained when $p(\qvar|\dvar)$ is uniformly distributed over a compact subset of $\dspace$. In this case, queries corresponding to a document $\doc$ are completely unrelated to the semantics of $\doc$.

In the experiments described in Section \ref{sec:compare}, we set $\alphbet=[1,2,\cdots,30]$. The ID string's maximum length $m$ was set to the minimum value such that $|\alphbet|^m \geq |\docs|$.

\subsection{Hierarchical $k$-Means Indexing (HKmI)}
\label{sec:hkmi}
The typical indexing uses a hierarchical $k$-means clustering algorithm. A document's ID string $\idx=[\idx_1,\cdots,\idx_m]$ is set to the indices of clusters of depth $m$, such that the document belongs at every level of the hierarchy, where $k$-means clustering is applied to partition the set of documents.

In a $k$-means clustering process, the clusters' centroid vectors are optimized to minimize the sum of the squared distances from each centroid to the cluster members. This procedure can be interpreted within a maximum-likelihood formulation by assuming that a document belonging to a cluster is sampled from a Gaussian distribution with the mean vector as the centroid.

Hence, $k$-means provides a BMI if we assume $p(\qvar|\dvar)\sim
N(\mu_\doc, \Sigma)$ and $p(\qvar|\idvar)\sim N(\mu_\idx, \sigma^2 I)$,
where $N$ represents a Gaussian distribution; $\mu_\doc$ and
$\mu_\idx$ are mean vectors specific to $\doc$ and $\idx$,
respectively; $\Sigma$ is any covariance matrix; and $\sigma^2 I$ is a
diagonal matrix. This conclusion immediately follows from substituting
the Gaussian density functions into Formula (\ref{eq:ibsolution}). The
optimality holds for any choice of $\beta$ in Formula (\ref{eq:ib}). A
proof is provided in Appendix \ref{sec:bmi-kmeans}.
The linkage between $k$-means clustering and information
bottleneck theory has been theoretically explored by
\citet{still2003geometric} in depth.

Following \citet{tay2022dsi}, we set $\mu_\doc$ as the vectoral embedding generated by a BERT model \citep{devlin2019bert} for a document $\doc$. In our experiments, we used the same settings for $\alphbet$ and $m$ as in HRI (Section \ref{sec:hri}). These settings were used as the default for the {\em neural corpus indexer} (NCI) \citep{wang2022nci}.

\subsection{Locality-Sensitive Hashing Indexing (LSHI)}
Locality-sensitive hashing (LSH) can be considered in the same
formulation. In LSH the index $\idx_i$ is Boolean; that is, $V =
\{0,1\}$, where each element of
$\idx=[\idx_1,\cdots,\idx_i,\cdots,\idx_m]$ is independently generated
by a $p$-stable LSH algorithm \citep{datar2004locality}, which is a
hyperplane classifier in Euclidean space. The hyperplane's location
and direction are randomly determined. Unlike hierarchical indexing
methods, LSH-based indexing does not produce ``semantic prefixes'';
nevertheless, $\ids$ encodes location-related information.

In our experiments, LSH indexing was implemented in three steps.
First, a standard LSH code, a Boolean vector, was acquired for every
document. Second, every fifth entry of the vector was mapped to
$\alphbet=[1,2,\cdots,32]$. Third, LSH code collisions were resolved
by appending additional digits acquired by a hierarchical $k$-means
algorithm.

\subsection{Our Proposal: Bottleneck-Minimal Indexing (BMI)}
\label{sec:method}

Formula (\ref{eq:ibsolution}) indicates that the optimal indexing is
dictated by the distributions $p(\qvar|\dvar)$ and
$p(\qvar|\idvar)$ over the query space, rather than the document space.
In other words, any indexing method implemented without considering
the distribution of $\qvar$ would not be able to acquire the optimal
indexing. Typical previous works on GDR have this problem.
The concept of leveraging query information has been
empirically investigated in the contexts of vector quantization
\citep{gupta2022bliss,lu2023knowledge} and recommendation systems
\citep{zeng2023scalable}; our contribution extends this exploration by
offering a theoretical rationale for the significance of query
distribution, grounded in information bottleneck theory.

Hence, we propose a new indexing method that incorporates both the
queries and the documents.
For the purposes of simplicity in this paper, we assume
$p(\qvar|\dvar = \doc)$ and $p(\qvar|\idvar = \idx)$ to follow
Gaussian distributions, enabling the analytical derivation of
bottleneck-minimizing indexing. In a manner akin to HKmI discussed in
Section \ref{sec:hkmi}, $k$-means emerges as the optimal indexing
strategy under this Gaussian assumption. However, our approach
diverges from HKmI by employing hierarchical $k$-means clustering on
the set $\{\mu_{\qvar|\doc}: \forall\doc\in\docs\}$, which comprises
the mean vectors of the Gaussian $p(\qvar|\dvar = \doc)$ for each
document $\doc$.
We examine $k$-means here for a fair comparison with HKmI, in addition
to its simplicity. However, Definition \ref{def:bmi} of BMI can be
used to analyze more complex clustering algorithms under more
sophisticated assumptions on $p(\qvar|\doc)$ and $p(\qvar|\idx)$.

Unlike $\mu_\doc$ for HKmI, however, it is not straightforward to
obtain $\mu_{\qvar|\doc}$, and it must be estimated for each document
$\doc$. In this paper, we estimate $\mu_{\qvar|\doc}$ as the mean of
the BERT embeddings of queries generated from document $\doc$,
which is the maximum-likelihood estimator given the
previously mentioned Gaussian distribution assumption of
$p(\qvar|\doc)$. A proof is provided in Appendix \ref{sec:mle}. We
chose BERT for consistency with previous works, including DSI
\citep{tay2022dsi} and NCI \citep{wang2022nci}; in a future work, we
may examine other models that are adapted to short texts.

Let $\queries_\doc$ denote a set of queries for $\doc$. The document
$\doc$ is now represented by mean(BERT($\queries_\doc$)), the mean
vector of the BERT embeddings of queries in $\queries_\doc$, instead
of the document's BERT embedding. The hierarchical $k$-means algorithm
is applied to
$\{\mu_{\qvar|\doc}=\text{mean}(\text{BERT}(\queries_\doc)):
\doc\in\docs \}$, which produces a new set of ID strings, $\ids$.

We follow \citet{wang2022nci} in constructing $\queries_\doc$ to
comprise the following three kinds of queries, as follows:
\begin{description}[noitemsep,topsep=0pt]
\item[\textbf{RealQ}:] Real queries from the training set;
\item[\textbf{GenQ}:] Queries generated by a pretrained model \texttt{docT5query} \footnote{
\scriptsize\url{huggingface.co/castorini/doc2query-t5-base-msmarco}
}\citep{nogueira2019doc2query};
\item[\textbf{DocSeg}:]Random segments of the original documents.
\end{description}
Section \ref{sec:data} gives the details of generating these queries.

In a real application, the distributions $p(\qvar|\dvar = \doc)$ and
$p(\qvar|\idvar = \idx)$ may deviate from Gaussian assumptions,
potentially exhibiting multi-modal characteristics. Under these
conditions, $k$-means clustering methods become suboptimal and should
be replaced by another sophisticated clustering methods that
reflect the true distribution of the data. Developing such methods
form a future direction.

\section{Data and Settings}
\subsection{Datasets and Metrics}
\label{sec:data}
We evaluated different indexing methods on two datasets: NQ320K \citep{kwiatkowski2019natural}, and MARCO Lite, which is a subset extracted from the document ranking dataset in MS MARCO \citep{nguyen2016msmarco}. The upper half of Table \ref{tbl:data} summarizes the basic statistics of the two datasets.

The entire MS MARCO dataset has 3.2 million documents and 367,013
queries. MARCO Lite was constructed by first randomly selecting half
the queries and then extracting those queries' gold-standard
documents.

The lower half of Table \ref{tbl:data} summarizes our proposed method's generated queries (i.e., $\queries_\doc$) for creating document ID strings, as described at the end of Section \ref{sec:method}. For NQ320K, we followed the settings in \citet{wang2022nci} to produce 15 \textbf{GenQ} queries; for MARCO Lite, we produced 5 \textbf{GenQ} queries per document. As for \textbf{DocSeg} queries, we randomly selected 10$\sim$12 segments (depending on the document length) per document as queries for both datasets; each segment had around 60 words.

\begin{table}[t]
    \centering
    \small
 \caption{Descriptive statistics of the datasets (upper) and generated queries (lower).}
 \vskip-0.8em
    \label{tbl:data}
    \setlength{\tabcolsep}{1pt}
    \begin{tabular*}{\linewidth}{@{\extracolsep{\fill}} lcccc}
        \toprule
         & \multicolumn{2}{c}{NQ320K} & \multicolumn{2}{c}{MS MARCO Lite} \\
         & \# & mean \# words & \# & mean \# words \\
        \midrule

        documents & 109,739 & 4902.7 & 138,457 & 1210.1 \\
        queries (train) & 307,373 & 9.2 & 183,947 & 6.0 \\
        queries (test) & 7,830 & 9.3 & 2,792 & 5.9 \\
        \midrule
        \multicolumn{5}{l}{\textbf{Generated queries}} \\
        GenQ  & 1,646,085 & 5.6 & 692,285 & 5.5 \\
        DocSeg & 1,168,585 & 62.9 & 1,393,329 & 59.0 \\
        \bottomrule
    \end{tabular*}
\end{table}

We followed previous works by using the recall (Rec@N) and mean
reciprocal rank (MRR) for evaluation, where a higher Rec@N or MRR
indicates a better GDR system.
For each query, GDR produces a ranking
of documents with respect to their degrees of relevance to the query,
via a beam-search procedure as suggested in \citet{tay2022dsi}.
In our evaluation across two datasets, each query was associated with
a single gold-standard document. The Rec@N measures the percentage of
queries for which the gold-standard document is among the top N
documents in the ranking, while the MRR is equal to the gold-standard
document's MRR. We used values of $N=1,10,100$.

\subsection{Model and Training Settings}
\label{sec:ablation}
We used the same neural-network architecture as in NCI
\citep{wang2022nci}. For sequence-to-sequence generation of an index
for a query, the NCI architecture combined a standard transformer
\citep{vaswani2017transformer} with a {\em prefix-aware
weight-adaptive} (PAWA) decoder. The Transformer encoder's parameters
were initialized with T5 weights \citep{raffel2020exploring} acquired
by large-scale pretraining, while the decoder weights were randomly
initialized.

We tested models of different sizes, which were initialized from the
weights of
T5-tiny\footnote{\scriptsize\url{https://huggingface.co/google/t5-efficient-tiny}},
T5-mini\footnote{\scriptsize\url{https://huggingface.co/google/t5-efficient-mini}},
T5-small\footnote{\scriptsize\url{https://huggingface.co/t5-small}},
and T5-base\footnote{\scriptsize\url{https://huggingface.co/t5-base}},
where the string after ``T5'' indicates the weights. The PAWA decoder
had four transformer layers. All models were trained using the default
hyperparameters of NCI, as provided in its official GitHub
repository\footnote{\scriptsize\url{github.com/solidsea98/Neural-Corpus-Indexer-NCI}}.

For ablation tests, we considered different $\queries_\doc$ variants.
Apart from setting
$\queries_\doc=\text{GenQ}+\text{RealQ}+\text{DocSeg}$ to include all
three kinds of queries, we also tested $\text{GenQ}$ alone and
$\text{GenQ} + \text{RealQ}$. In addition, the previous method of
clustering documents by their embeddings also constitutes an ablated
version, denoted here as \textbf{Doc}, because it is almost equivalent
to $\text{DocSeg}$, which uses an average of random document segments.

\subsection{Estimation of Mutual Information}
\label{sec:estimation}
To verify the information bottleneck in GDR, the mutual information $I(\dvar;\idvar)$ and $I(\dvar;\qvar|\idvar)$ must be calculated. They are defined as follows:
\begin{align}
    I(\dvar;\idvar) &= \mathbb{E}_{\dvar, \idvar} \log \frac{p(\dvar|\idvar)}{p(\dvar)}, \\
    I(\dvar;\qvar|\idvar) &= \mathbb{E}_{\dvar,\idvar,\qvar}
    \log \frac{p(\dvar,\qvar|\idvar)}{p(\dvar|\idvar) p(\qvar|\idvar)} \label{eq:cmi} \\
    &= \mathbb{E}_{\dvar,\idvar,\qvar} \log \frac{p(\dvar|\qvar)}{p(\idvar|\qvar)}\frac{p(\idvar)}{p(\dvar)}.
    \label{eq:mi-dqid}
\end{align}
Appendix \ref{sec:mi-equality} gives the detailed derivation of
Formula (\ref{eq:mi-dqid}). The two definitions above are calculated
as further detailed in Appendix \ref{sec:ixt}. To acquire the mutual
information values for $|\ids|<|\docs|$, we considered a generalized
GDR task of predicting the prefixes of ID strings rather than the
whole strings, for a prefix length $l$. In addition to the default
setting $l=m$ (i.e., standard GDR), we also tested $l=2,3$.

\section{Experiments}
\subsection{Quantification of GDR Bottleneck}
\label{sec:bottleneck-experiment}

\begin{figure}[t]
    \centering
    \begin{minipage}[t]{0.48\linewidth}
        \centering
        \begin{tikzpicture}
            \draw (0, 0) node[inner sep=0]
            {\includegraphics[width=\linewidth]{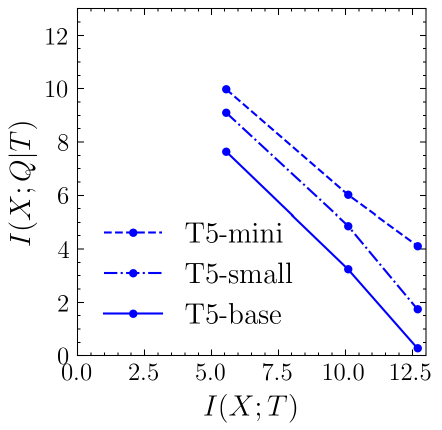}};
            \draw (-0.16\linewidth, 0.16\linewidth) node {\large (a)};
        \end{tikzpicture}
    \end{minipage}
    \begin{minipage}[t]{0.48\linewidth}
        \centering
        \begin{tikzpicture}
            \draw (0, 0) node[inner sep=0]
            {\includegraphics[width=\linewidth]{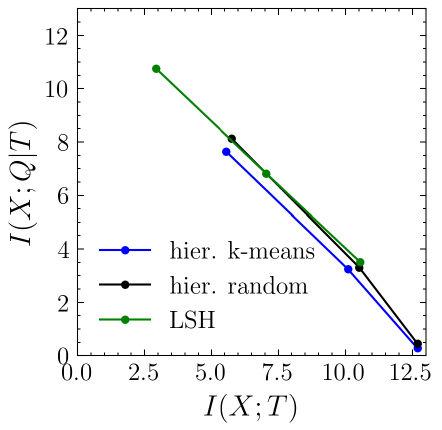}};
            \draw (-0.16\linewidth, 0.16\linewidth) node {\large (b)};
        \end{tikzpicture}
    \end{minipage}
\caption{Experimental information bottleneck curves corresponding to Figure \ref{fig:bottleneck-curve}. (a) Empirical information curves for HKmI, measured with T5 models of different sizes. (b) Empirical information curves for different indexing methods, estimated on the NQ320K dataset with the T5-base model. \label{fig:ibplane}}
\vskip-1.5em
\end{figure}

\begin{figure}
    \centering
    \includegraphics[width=0.60\linewidth]{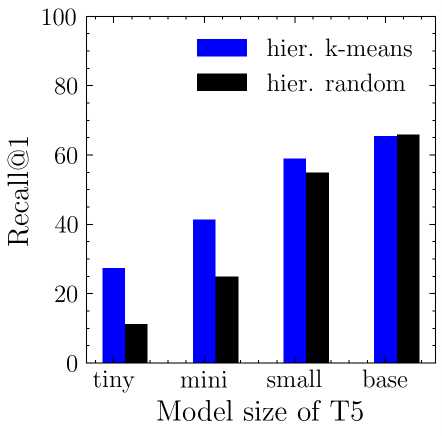}
\caption{Rec@1 scores on the test set of NQ320K for document IDs
generated by hierarchical $k$-means (blue) or random (red) clustering,
for different sizes of the finetuned language model T5.}
    \label{fig:kmeans-vs-random}
\end{figure}

\begin{table*}[htbp]
    \centering
    \small
\caption{Performance of different indexing methods.}
\vskip-1em
    \label{tbl:performance}
    \begin{tabular*}{\linewidth}{@{\extracolsep{\fill}} l|cccc|cccc}
        \toprule
         & \multicolumn{4}{c|}{NQ320k}
         & \multicolumn{4}{c}{MARCO Lite} \\
        Indexing Methods & Rec@1 & Rec@10 & Rec@100 & MRR@100
        & Rec@1 & Rec@10 & Rec@100 & MRR@100
        \\
        \midrule
        \multicolumn{1}{l|}{\textbf{T5-mini}} & & & & & & & & \\
        Hier. random clustering & 24.90 & 54.89 & 76.78 & 34.72 & 5.98 & 10.57 & 19.59 & 7.69\\
        Locality-sensitive hashing & 27.94 & 57.28 & 78.07 & 37.63 & 7.06 & 20.42 & 41.26 & 11.28 \\
        Hier. $k$-means clustering & 41.43 & 72.16 & 87.75 & 52.30 & 7.09 & 23.17 & 50.50 & 12.51 \\
        \midrule
        Our method (BMI) & \textbf{48.49} & \textbf{76.73} & \textbf{88.84} & \textbf{58.48} & \textbf{13.54} & \textbf{42.77} & \textbf{71.49} & \textbf{22.94} \\
        \midrule\midrule
        \multicolumn{1}{l|}{\textbf{T5-small}} & & & & & & & & \\
        Hier. random clustering & 54.93 & 77.20 & 87.79 & 62.75 & 12.64 & 43.88 & 71.91 & 22.42 \\
        Locality-sensitive hashing & 52.16 & 77.01 & 87.87 & 61.05 & 15.62 & 47.46 & 76.00 & 25.71 \\
        Hier. $k$-means clustering & \textbf{58.94} & \textbf{82.94} & \textbf{92.08} & \textbf{67.67} & 22.35 & 56.30 & 83.02 & 33.04 \\
        \midrule
        Our method (BMI) & 58.61 & 82.76 & 91.42 & 67.38 & \textbf{26.71} & \textbf{61.00} & \textbf{85.46} & \textbf{38.02} \\
        \midrule\midrule
        \multicolumn{1}{l|}{\textbf{T5-base}} & & & & & & & & \\
        Hier. random clustering & 65.85 & 84.16 & 91.51 & 72.59 & 39.00 & 70.45 & 88.65 & 49.54 \\
        Locality-sensitive hashing & 62.82 & 83.27 & 91.42 & 70.37 & 39.83 & 70.77 & 87.00 & 50.29 \\
        Hier. $k$-means clustering & 65.43 & 85.20 & 92.64 & 72.73 & 41.48 & 74.39 & 89.65 & 52.57 \\
        \midrule
        Our method (BMI) & \textbf{66.69} & \textbf{86.17} & \textbf{93.23} & \textbf{73.91} & \textbf{45.20} & \textbf{76.04} &\textbf{90.62} & \textbf{55.47} \\
        \bottomrule
    \end{tabular*}
\end{table*}

Figure \ref{fig:ibplane}(a) shows $I(\dvar; \qvar | \idvar)$ with respect to $I(\dvar; \idvar)$, as measured on the NQ320K dataset's training set. The ID strings were acquired by HKmI. A curve was obtained for every neural network model of a different size. The points on each curve were acquired with different $l$ values, as detailed at the end of Section \ref{sec:estimation}. As seen in the figure, a larger model corresponded to a curve closer to the graph's lower-left corner, thus suggesting the existence of a bottleneck as anticipated by the IB theory.

Figure \ref{fig:ibplane}(b) compares several existing indexing methods. Here, HKmI was closer to the lower-left corner than the other two methods. Hence, the optimal condition for $k$-means indexing better fits the reality of data.


Next, Figure \ref{fig:kmeans-vs-random} compares the random (black)
and $k$-means (blue) hierarchical indexing methods in terms of the
Rec@1 on the NQ320K test set. As the model size is reduced, Rec@1
shows a general decline for both methods. With the reduction in size,
the superiority of the $k$-means method over the random method became
apparent. While the performance of the two methods was similar for the
T5-base model, the Rec@1 of the $k$-means method was more than twice
that of the random method in the T5-tiny model.

These observations suggests that the IB curve indicates the quality of
GDR, specifically in terms of its distance from the graph's lower-left
corner. While the Rec@1 merely evaluates the case of $|\ids|=|\docs|$,
the IB curve evaluates the case of $|\ids| < |\docs|$, thus enabling
recognition of overfitting in GDR.

\subsection{Comparison Among Indexing Methods}
\label{sec:compare}
We also evaluated our proposed indexing method in comparison with the existing methods. Table \ref{tbl:performance} summarizes our experimental results on the two datasets. The table's three blocks correspond to training models of different sizes: T5-mini, T5-small, and T5-base, from top to bottom. Each row represents an indexing method, and each column corresponds to a metric. The scores were acquired on the test sets. A higher score indicates a better indexing method.

As seen in the table, our method (bottom row in each block) achieved the best scores under most settings. In particular, our method consistently outperformed hierarchical $k$-means (third row in each block), often with a large difference. On MARCO Lite and with T5-base (right side, bottom block), our method had a Rec@1 of 45.20, 3.72 points higher than the score for the original hierarchical $k$-means indexing. This margin was even larger than that between $k$-means and random clustering (2.48 points).

With the smaller model T5-mini (top block), the improvement achieved by our method was dramatic. Specifically, compared with the original hierarchical $k$-means, our method improved the Rec@1 by 7.06 points (17\%) on NQ320K and 6.45 points (91\%) on MARCO Lite.
\blue{
This improvement is corroborated by the observation that our method's
IB curve is located closer to the lower-left corner, as shown in
Figure \ref{fig:ibplane-mini-marco} in Appendix
\ref{sec:ib-plane-extra}.
}

These large improvements indicate that there is a better indexing than
with the previous methods, but this indexing must involve $\qvar$.
Recall that our method is also based on hierarchical $k$-means, and
the only difference from the original method is application of the
clustering algorithm to queries rather than documents. These
improvement results indicate that the information bottleneck theory
applies well to GDR. In other words, it is important to pursue
``bottleneck-minimal'' rather than ``distortion-minimal'' indexing.

\subsection{Ablation Study}

\begin{table}[h]
    \centering
    \small
\caption{Ablation study results on the test set of MARCO Lite.
GQ and RQ are abbreviations of GenQ and RealQ, respectively.}
\vskip-1em
    \label{tbl:ablation}
    \setlength{\tabcolsep}{1pt}
    \begin{tabular*}{\linewidth}{@{\extracolsep{\fill}} lccccc}
        \toprule
         & Rec@1 & Rec@10 & Rec@100 & MRR@100 \\
        \midrule
        \multicolumn{5}{c}{\textbf{T5-mini}} \\
        Doc & 7.09 & 23.17 & 50.50 & 12.51 \\
        GenQ & 9.24 & 27.51 & 56.27 & 15.55 \\
        GenQ+RealQ & 10.03 & 30.91 & 56.59 & 16.70 \\
        GQ+RQ+DocSeg & \textbf{13.54} & \textbf{42.77} & \textbf{71.49} & \textbf{22.94} \\
        \midrule
        \multicolumn{5}{c}{\textbf{T5-small}} \\
        Doc & 22.35 & 56.30 & 83.02 & 33.04 \\
        GenQ & 22.74 & 52.90 & 79.55 & 32.29 \\
        GenQ+RealQ & 25.25 & 58.92 & 82.84 & 36.52 \\
        GQ+RQ+DocSeg & \textbf{26.71} & \textbf{61.00} & \textbf{85.46} & \textbf{38.02}  \\
        \midrule
        \multicolumn{5}{c}{\textbf{T5-base}} \\
        Doc & 41.48 & 74.39 & 89.65 & 52.57 \\
        GenQ & 39.04 & 70.70 & 88.93 & 49.88 \\
        GenQ+RealQ & 39.43 & 72.31 & 88.93 & 50.62 \\
        GQ+RQ+DocSeg & \textbf{45.20} & \textbf{76.04} &\textbf{90.62} & \textbf{55.47} \\
        \bottomrule
    \end{tabular*}
\end{table}

Furthermore, we compared our proposed method with ablated versions of
it, as detailed in Section \ref{sec:ablation}. Table
\ref{tbl:ablation} summarizes the results on the MARCO Lite dataset.
Each row besides the last one represents an ablated version used a
certain subset of the generated queries as $\queries_\doc$ to estimate
$\mu_{\qvar|\doc}$, as mentioned in Section \ref{sec:method}.

With the smaller models (T5-mini and T5-small), even the ablated
versions outperformed Doc. For example, GenQ achieved a Rec@1 of 9.24
with T5-mini, which was higher than the Doc result of 7.09. This is
quite surprising because GenQ (or GenQ+RealQ) only used short queries.
As seen in the lower half of Table \ref{tbl:data}, the GenQ queries
had a mean length of only 5.5 words, much shorter than the document
lengths.

Nevertheless, there was a significant margin between the results with and without DocSeg. With T5-base, GenQ+RealQ+DocSeg had a Rec@1 of 45.20, 5.77 points higher than that for GenQ+RealQ.

The addition of DocSeg is effective for adding rich information existing in documents. Compared with the DocSeg queries, the GenQ and RealQ queries had limited variety: many of them were duplicates that compromised the retrieval performance.

\subsection{Comparison with SOTA}

\begin{figure*}[htbp]

\begin{minipage}[t]{.40\linewidth}
    \vspace{0pt}
    \centering
    \small
    \includegraphics[width=0.8\linewidth]{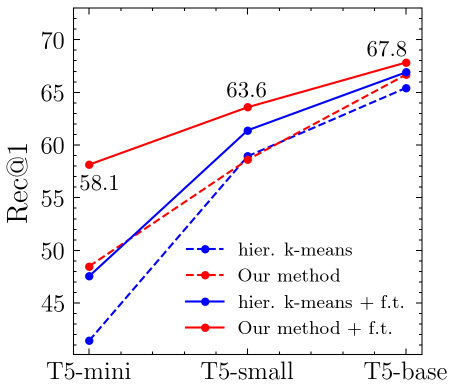}
    \caption{\blue{Performance enhancement
    by finetuning the \texttt{docT5query} model for generating
    improved GenQ documents.}}
    \label{fig:finetune}
\end{minipage}
\hspace{0.03\linewidth}
\begin{minipage}[t]{.55\linewidth}
    \vspace{0pt}
    \centering
    \small
\captionof{table}{
\blue{
Performance of BMI on NQ320K, with improved GenQ queries by finetuning
\texttt{docT5query} on the training set of NQ320K, compared with
existing GDR methods. }}
    \label{tbl:sota}
    \setlength{\tabcolsep}{1pt}
    \begin{tabular*}{\linewidth}{@{\extracolsep{\fill}} lccc}
        \toprule
        GDR Method & \multicolumn{2}{c}{Index Type} & Rec@1 \\
        \midrule
        DSI \citep{tay2022dsi} & static & number & 27.4 \\
        Ultron-PQ \citep{zhou2022ultron} & static & number & 25.6 \\
        NCI \citep{wang2022nci} & static & number & 66.9 \\
        Tied-At. \citep{nguyen2023generative} & static & number & 65.3 \\
        SEAL \citep{bevilacqua2022autoregressive} & static & textual & 26.3 \\
        Ultron-URL \citep{zhou2022ultron} & static & textual & 33.8 \\
        Our method & static & number & \textbf{67.8} \\
        \midrule
        LMIndexer \citep{jin2023language} & learned & textual & 66.3 \\
        GENRET \citep{sun2023learning} & learned & textual & 68.1 \\
        NOVO \citep{wang2023novo} & learned & textual & \textbf{69.3} \\
        \bottomrule
    \end{tabular*}
\end{minipage}
\end{figure*}

\blue{
Finally, we compare our proposed method against state-of-the-art
(SOTA) methods beyond the DSI/NCI framework utilizing the NQ320K
dataset. To show the power of our method at full strength, we
enhanced the GenQ queries by finetuning the document-to-query model
(\texttt{docT5query}) on the training set of NQ320K, thus producing
higher-quality ID strings. Details of this finetuning process are
left to Appendix \ref{sec:finetuning}.

Figure \ref{fig:finetune} shows the results of the finetuning process.
Solid curves shows the performance using enhanced GenQ queries, while
dashed curves represent the baseline results reported in the second
column of Table \ref{tbl:performance}. The HKmI method was also
examined for its use of enhanced GenQ queries as augmented training
data, although it, unlike our method, did not utilize these enhanced
GenQ queries for indexing.

An improvement in Rec@1 is observable across all model sizes for both
our method (in red) and HKmI (in blue), with our approach showing a
more significant enhancement, particularly with smaller models such as
T5-mini or T5-small. After fine-tuning, our method surpassed HKmI
across all model sizes. This substantial enhancement underscores the
critical role of accurately estimating $\mu_{\qvar|\doc}$, the mean of
query vectors, in improving the quality of document indexes.

Table \ref{tbl:sota} compares our new results with previous methods
beyond the DSI/NCI scope, using the T5-base retrieval model
for consistency with most prior studies.
Excluding NCI (i.e., HKmI) and our method---whose results are shown
in Figure \ref{fig:finetune}---the performance scores for all other
methods are taken directly from their original papers.

The methods are categorized based on whether indexes are determined
before or during the training phase of the retrieval model, labeled as
``static'' and ``learned,'' respectively, as detailed in Table
\ref{tbl:sota}'s second column. Moreover, the third column specifies
whether an index is a numeric string of text (a sequence of words).
Designing text-based indexing is inherently more complex and extends
beyond number-based methods. This analysis includes only methods which
reported their performance on the NQ320K dataset, which means we focus
on document retrieval. Related works in vector quantization or
recommendation systems are not compared here.

Our approach achieved a Rec@1 of 67.8, surpassing all models that
relied on static indexes. This performance closely competes with SOTA
methods, such as GENRET \citep{sun2023learning} at 68.1 and NOVO
\citep{wang2023novo} at 69.3, which employ textual ID strings and
design sophisticated ways of learning indexes concurrent with the
retrieval model's training.

These findings underscore our method's effectiveness and simplicity,
utilizing static, number-based indexes generated through a simple
$k$-means process. Our model's training procedure is the same as that
of NCI, without necessitating additional computational resources, with
the sole distinction being the production of document indexes by
incorporating query distribution. }

\section{Conclusion}
We introduced a new way to formulate generative document retrieval (GDR) by applying the information bottleneck theory. In GDR, a document $\doc \in \docs$ is indexed by $\idx \in \ids$, and queries $\queries$ are trained by neural autoregressive models for mapping to indexes $\ids$. GDR can thus be treated as the transmission of information from documents $\docs$ to queries $\queries$, with the requirement to transmit more bits via the indexes $\ids$. While previous methods considered how to acquire the best $\ids$ with respect to $\docs$, this work also incorporated $\queries$.

This reformulation naturally suggested a new document indexing method
to design GDR by using $\queries$. Our method, which applies the
hierarchical $k$-means algorithm to queries instead of document
vectors, achieved a significant improvement on two datasets in terms
of the recall and MRR scores. The margin was especially large when a
smaller neural network was used, with an improvement of up to 6.45
points, or 91\% relative improvement.

\blue{
In a comparison with various indexing methods, our method not only
outperformed all the other models that are based on static indexes,
but also achieved a competitive accuracy against those SOTA models
which employ textual ID strings and sophisticated index-learning
strategies during the retrieval model's training. These results
affirm the effectiveness of our approach, positioning it as a simple
yet strong competitor within the field. }

\section*{Acknowledgements}
This work was supported by JST CREST Grant Number JPMJCR2114.

\section*{Impact Statement}
This paper presents work whose goal is to advance the field of
machine learning and, specifically, generative information retrieval.
There are many potential societal consequences of our work, none of
which we feel must be specifically highlighted here.


\bibliography{main}
\bibliographystyle{icml2024}

\newpage
\appendix
\onecolumn

\section{Mathematical Details}
\label{sec:proof}

\subsection{$I(\dvar; \qvar|\idvar)$ in Information Bottleneck Context}
\label{sec:mi-equality}
The information bottleneck theory assumes the Markovian condition $\idvar\leftrightarrow\dvar\leftrightarrow\qvar$, which means that $p(\qvar|\dvar,\idvar)=p(\qvar|\dvar)$.

\begin{proposition}
\begin{equation}
    I(\dvar; \qvar|\idvar)=-I(\idvar; \qvar)+\text{constant}
    \label{eq:mi-equality}
\end{equation}
\end{proposition}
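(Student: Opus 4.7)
The plan is to apply the chain rule of mutual information to the joint $(\dvar,\idvar)$ against $\qvar$, expanded two different ways, and then use the Markov assumption to cancel one of the four resulting terms.

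Concretely, I would first write
\begin{equation}
    I(\qvar; \dvar, \idvar) = I(\qvar; \idvar) + I(\qvar; \dvar \mid \idvar)
    = I(\qvar; \dvar) + I(\qvar; \idvar \mid \dvar),
\end{equation}
which is just the chain rule applied to the two possible orders of conditioning. Next I would invoke the hypothesized Markov chain $\idvar\leftrightarrow\dvar\leftrightarrow\qvar$, i.e. $p(\qvar\mid\dvar,\idvar) = p(\qvar\mid\dvar)$. This pointwise equality of conditional distributions forces $I(\qvar; \idvar \mid \dvar) = 0$, because the conditional mutual information is an expectation of a KL divergence between $p(\qvar\mid\dvar,\idvar)$ and $p(\qvar\mid\dvar)$, and that KL vanishes under the Markov assumption.

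Combining the two displays then gives
\begin{equation}
    I(\dvar; \qvar \mid \idvar) = I(\dvar; \qvar) - I(\idvar; \qvar),
\end{equation}
using the symmetry $I(\qvar;\dvar\mid\idvar)=I(\dvar;\qvar\mid\idvar)$. Finally I would observe that in the optimization context of Formula (\ref{eq:ib}), the quantity being varied is $p(\idvar\mid\dvar)$, while the joint distribution $p(\dvar,\qvar)$ is fixed by the data-generating process. Hence $I(\dvar;\qvar)$ does not depend on the optimization variable and may be absorbed into the additive ``constant,'' yielding exactly Formula (\ref{eq:mi-equality}).

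There is no real obstacle here: the whole argument is two lines of the chain rule plus one application of the Markov property. The only point that requires care, and the only place a reader might object, is the interpretation of ``constant.'' I would therefore make explicit that ``constant'' means \emph{constant with respect to the indexing channel} $p(\idvar\mid\dvar)$, which is precisely the quantity being optimized in the Lagrangian (\ref{eq:ib2}); the equivalence between (\ref{eq:ib}) and (\ref{eq:ib2}) then follows immediately by substituting $-I(\idvar;\qvar) = I(\dvar;\qvar\mid\idvar) - I(\dvar;\qvar)$ into (\ref{eq:ib}) and dropping the data-dependent constant $-\beta I(\dvar;\qvar)$.
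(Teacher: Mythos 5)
Your proof is correct, and it takes a genuinely different route from the paper's. The paper works directly on the integrand of the definition of $I(\dvar;\qvar|\idvar)$: it expands $p(\dvar,\qvar|\idvar) = p(\qvar|\dvar,\idvar)\,p(\dvar|\idvar)$, cancels $p(\dvar|\idvar)$, invokes the Markov property to replace $p(\qvar|\dvar,\idvar)$ with $p(\qvar|\dvar)$, then applies Bayes' theorem to rewrite $\log\bigl(p(\qvar|\dvar)/p(\qvar|\idvar)\bigr)$ as a difference of the two log-ratios whose expectations are $I(\dvar;\qvar)$ and $I(\idvar;\qvar)$. You instead invoke the chain rule of mutual information at the level of the quantities themselves, expanding $I(\qvar;\dvar,\idvar)$ in the two possible orders and killing $I(\qvar;\idvar\mid\dvar)$ via the Markov assumption. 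The two arguments are equivalent in substance (your $I(\qvar;\idvar\mid\dvar)=0$ step is exactly the paper's line $p(\qvar|\dvar,\idvar)=p(\qvar|\dvar)$, viewed at the information-quantity level), but yours is shorter, uses only the standard chain-rule identity rather than an explicit Bayes-theorem manipulation, and is arguably the textbook way to prove a statement of this form. You also add a useful clarification the paper leaves implicit: that ``constant'' means constant with respect to the optimization variable $p(\idvar\mid\dvar)$, since $I(\dvar;\qvar)$ depends only on the fixed joint $p(\dvar,\qvar)$. This is the right reading and is exactly what justifies dropping the term when passing between the two Lagrangians.
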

\begin{proof}
\begin{align}
    I(\dvar;\qvar|\idvar)
    &= \mathbb{E}_{\dvar,\idvar,\qvar}
    \log \frac{p(\dvar,\qvar|\idvar)}{p(\dvar|\idvar) p(\qvar|\idvar)} \label{eq:cmi2} \\
    &= \mathbb{E}_{\dvar,\idvar,\qvar}
    \log \frac{p(\qvar|\dvar,\idvar)p(\dvar|\idvar)}{p(\dvar|\idvar) p(\qvar|\idvar)} \\
    &= \mathbb{E}_{\dvar,\idvar,\qvar} \log \frac{p(\qvar|\dvar)}{p(\qvar|\idvar)} & \Bigl(\because p(\qvar|\dvar,\idvar)=p(\qvar|\dvar)\Bigr) \label{eq:cmi-simp} \\
    &= \mathbb{E}_{\dvar,\idvar,\qvar} \log \frac{p(\dvar|\qvar)}{p(\idvar|\qvar)}\frac{p(\idvar)}{p(\dvar)} & (\because \text{Bayes' theorem}) \\
    &= - \mathbb{E}_{\dvar,\idvar,\qvar} \log \frac{p(\idvar|\qvar)}{p(\idvar)}
    + \mathbb{E}_{\dvar,\idvar,\qvar} \frac{p(\dvar|\qvar)}{p(\dvar)} \\
    &= - I(\idvar; \qvar) + I(\dvar; \qvar).
\end{align}
The mutual information $I(\dvar; \qvar)$ is a constant determined by the dataset. Therefore, $I(\dvar; \qvar|\idvar)=-I(\idvar; \qvar)+$constant.
\end{proof}

\begin{proposition}
The Lagrangians in Formulas (\ref{eq:ib}) and (\ref{eq:ib2}) are equivalent.
\end{proposition}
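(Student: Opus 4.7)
The plan is to reduce the claim to the preceding proposition by a direct substitution, so that essentially no new calculation is needed. First I would recall the identity just established in Formula \eqref{eq:mi-equality}, which, under the Markov assumption $\idvar\leftrightarrow\dvar\leftrightarrow\qvar$, gives
\[
I(\dvar;\qvar\mid\idvar) \;=\; -\,I(\idvar;\qvar) \;+\; I(\dvar;\qvar),
\]
and in which $I(\dvar;\qvar)$ depends only on the joint distribution of $\dvar$ and $\qvar$, not on the indexing $p(\idvar\mid\dvar)$. This is the only ingredient I need.

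Then I would substitute this identity into the right-hand side of Formula \eqref{eq:ib2}. Writing $c_0$ for the ``constant'' already appearing in \eqref{eq:ib2},
\[
I(\dvar;\idvar) + \beta\, I(\dvar;\qvar\mid\idvar) + c_0
\;=\; I(\dvar;\idvar) - \beta\, I(\idvar;\qvar) + \beta\, I(\dvar;\qvar) + c_0.
\]
Since neither $I(\dvar;\qvar)$ nor $c_0$ depends on the optimization variable $p(\idvar\mid\dvar)$, I would bundle $\beta\, I(\dvar;\qvar) + c_0$ into a single constant $C$, obtaining
\[
\mathcal{L}(p(\idvar\mid\dvar)) \;=\; I(\dvar;\idvar) - \beta\, I(\idvar;\qvar) + C,
\]
which differs from the Lagrangian in Formula \eqref{eq:ib} only by the additive constant $C$.

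Finally, I would conclude the equivalence by noting that two Lagrangians differing only by an additive constant that is independent of the optimization variable share the same stationary points, the same gradients with respect to $p(\idvar\mid\dvar)$, and therefore the same minimizers; in this sense they are equivalent as objective functions for the indexing problem. There is no genuine obstacle in this proof: the entire argument is a one-line substitution of the preceding proposition, and the only point requiring care is the interpretation of ``equivalent'' as equivalence of optimization objectives (equality up to an additive constant) rather than pointwise equality.
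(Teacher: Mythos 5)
Your proof is correct and matches the paper's approach: both rely on a direct substitution of the identity in Formula~(\ref{eq:mi-equality}) to pass between the two Lagrangians, with the only (harmless) difference being the direction of substitution and your explicit remark that ``equivalence'' means equality up to an additive constant independent of $p(\idvar\mid\dvar)$.
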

\begin{proof}
Substitution of Formula (\ref{eq:mi-equality}) into (\ref{eq:ib})
immediately produces Formula (\ref{eq:ib2}), which implies the two
Lagrangians' equivalence.
\end{proof}

\subsection{$k$-Means Produces Bottleneck Minimum}
\label{sec:bmi-kmeans}

In both the HKmI (Section \ref{sec:hkmi}) and BMI (Section
\ref{sec:method}), the $k$-means clustering method is applied
hierarchically to the vector representations of documents $\docs$. We
demonstrate that at each hierarchical level, as $\docs$ is assigned
identifiers from the alphabet $\alphbet$, $k$-means clustering
effectively minimizes the information bottleneck as defined in
Definition \ref{def:bmi}.

Let $\docv$ represent the vector representation of a document $\doc$.
The goal of $k$-means clustering is to partition $\{\docv:
\doc\in\docs\}$ into $k$ clusters, aiming to minimize the sum of the
squared distances from each document to its cluster's centroid. This
is represented by the partition function $f:\doc\mapsto i$. The
optimization problem is described as follows \citep{hartigan1979algorithm}:
\begin{align}
    \min_{f} S(f) &= \sum_{i=1}^k
    \sum_{\doc\in f^{-1}(i)} \lVert \docv - \bar{\docv}_i\rVert^2,
    \label{eq:kmeans} \\
    \text{where}~\bar{\docv}_i
    &=\frac{1}{|f^{-1}(i)|}\sum_{\doc\in f^{-1}(i)} \docv.
    \label{eq:centroid}
\end{align}
This paper focuses on the optimal solution $f^*$ of Formula
(\ref{eq:kmeans}). We acknowledge various implementations for
achieving $f^*$; however, they are not covered in this paper.

We now aim to establish the following proposition:
\begin{proposition}
Assume that $p(\qvar|\doc)\sim N(\docv, \Sigma)$ and
$p(\qvar|\idx)\sim N(\mu_\idx, \sigma^2 I)$, where $\Sigma$ is an
arbitrary covariance matrix and $\sigma^2 I$ is an isotropic diagonal
matrix; $\mu_\idx$ is the vector estimated for every element
$\idx\in\alphbet$ in the alphabet $\alphbet$, Then, any optimal
solution $f^*$ to Formula (\ref{eq:kmeans}) also maximizes the
likelihood $p(\docs,\queries|f)$ as defined in Formula
(\ref{eq:likelihood}).
\label{thm:kmeans}
\end{proposition}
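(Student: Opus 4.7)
The plan is to show, under the stated Gaussian assumptions, that the log-likelihood $\log p(\docs,\queries|f)$ reduces --- after joint optimization over the nuisance parameters $\{\mu_\idx\}_{\idx\in\alphbet}$ --- to a negative scalar multiple of the $k$-means objective $S(f)$ of Formula (\ref{eq:kmeans}) plus an $f$-independent constant. Once this reduction is established, the minimizers of $S(f)$ coincide with the maximizers of the likelihood, and the proposition follows immediately.

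The first step is to bring Formula (\ref{eq:likelihood}) into the form of Formula (\ref{eq:ibsolution}) via Bayes' rule: $p^*(\dvar=\doc\mid\idvar=f(\doc)) = p^*(\idvar=f(\doc)\mid\dvar=\doc)\,p(\doc)/p^*(\idvar=f(\doc))$. Substituting (\ref{eq:ibsolution}) causes the factor $p^*(\idvar=f(\doc))$ to cancel between numerator and denominator, and taking logarithms of the product over $\doc\in\docs$ yields
\begin{equation*}
\log p(\docs,\queries\mid f) = -\beta \sum_{\doc\in\docs} \text{KL}\bigl[p(\qvar\mid\doc) \bigm\lVert p(\qvar\mid f(\doc))\bigr] + C_1,
\end{equation*}
where $C_1 = \sum_{\doc}\log p(\doc) - \sum_{\doc}\log Z(\doc,\beta)$ is independent of $f$.

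The second step is to invoke the closed-form KL between two multivariate Gaussians. For $p(\qvar\mid\doc)\sim N(\docv,\Sigma)$ and $p(\qvar\mid\idvar)\sim N(\mu_\idx,\sigma^2 I)$, the divergence splits cleanly into (i) a piece depending only on $d$, $\sigma^2$, and $\Sigma$, which is $f$-independent and can be folded into the constant, and (ii) the Mahalanobis term $\frac{1}{2\sigma^2}\|\docv-\mu_{f(\doc)}\|^2$, which is the only contribution that couples the data to the indexing. Collecting terms gives
\begin{equation*}
\log p(\docs,\queries\mid f) = -\frac{\beta}{2\sigma^2}\sum_{\doc\in\docs}\|\docv-\mu_{f(\doc)}\|^2 + C_2.
\end{equation*}

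The final step handles the nuisance parameters $\{\mu_\idx\}$ by joint maximization over $(f,\{\mu_\idx\})$. For any fixed $f$, the inner problem separates over $\idx$, and maximizing the likelihood with respect to each $\mu_\idx$ is the standard least-squares identity whose unique solution is the cluster centroid $\bar{\docv}_\idx$ of Formula (\ref{eq:centroid}) --- equivalently, the MLE of a Gaussian mean. Substituting these centroids turns the quadratic sum into exactly $S(f)$, so the outer problem in $f$ becomes $\min_f S(f)$, and every $k$-means optimum $f^*$ is also a likelihood maximizer. The main obstacle I anticipate is the bookkeeping in the first step: because $Z(\doc,\beta)$ and $p^*(\idvar)$ are defined through the self-consistent fixed-point (\ref{eq:ibsolution}), one must justify treating them as $f$-independent normalizers while $f$ varies. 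The cleanest resolution is to observe that these factors cancel before any dependence on $f$ can propagate through the product; once this is verified, what remains is routine Gaussian algebra together with the MLE identity for a Gaussian mean.
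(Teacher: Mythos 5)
Your proposal is correct and follows essentially the same route as the paper: Bayes' rule to rewrite the likelihood in terms of the IB stationarity condition (\ref{eq:ibsolution}), cancellation of $p^*(\idvar)$, the closed-form Gaussian KL divergence whose only $f$-dependent piece is the quadratic term $\tfrac{1}{2\sigma^2}\lVert\docv-\mu_{f(\doc)}\rVert^2$, and then recovery of the $k$-means objective by recognizing the cluster centroid as the maximizer of the likelihood over each $\mu_\idx$. The caveat you raise about $Z(\doc,\beta)$ is a real subtlety that the paper also treats by fiat (dropping it under a $\propto$); your resolution note is consistent with the paper's implicit stance, so the two arguments are essentially identical.
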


\begin{proof}

By definition, $p(\docs,\queries|f)=
\prod_{\doc\in\docs} p^*(\dvar=\doc\mid\idvar=f(\doc))$,
where $p^*$ is detailed in Formula (\ref{eq:ibsolution}).
From Bayes' theorem:
\begin{align}
    p^*(\dvar|\idvar)=\frac{p^*(\idvar|\dvar)}{p^*(\idvar)} p(\dvar),
    \label{eq:bayes}
\end{align}
where $p(\dvar)$ is a constant independent of $f$.

Applying Formula (\ref{eq:bayes}) and then inserting Formula
(\ref{eq:ibsolution}), we derive:
\begin{align}
    p(\docs,\queries|f)
    &\propto \prod_{\doc\in\docs}
    \frac{p^*(\idvar=f(\doc)|\dvar=\doc)}{p^*(\idvar=f(\doc))} \\
    &= \frac{1}{\prod_{\doc\in\docs} Z(\doc,\beta)}
    \exp\left(
        -\beta \sum_{\doc\in\docs} \text{KL} \Bigl[
        p(\qvar|\dvar=\doc) \Bigm\lVert p\bigl(\qvar|\idvar=f(\doc)\bigr)
        \Bigr]
    \right)  \label{eq:simp1}
\end{align}

Assuming $p(\qvar|\doc)\sim N(\docv, \Sigma)$ and
$p(\qvar|\idx)\sim N(\mu_\idx, \sigma^2 I)$, the KL divergence is
computed analytically; expanding the expression 
in Formula (\ref{eq:simp1}) leads to:
\begin{align}
    p(\docs,\queries|f) &= p(\docs,\queries|f, \{\mu_\idx: \idx\in\alphbet\}) \\
    &\propto 
    \exp\left\{
        -\frac{\beta}{2\sigma^2}
        \sum_{\doc\in\docs} \lVert \docv - \mu_{f(\doc)} \rVert^2
    \right\}
    \underbrace{
    \exp\left\{
        -\frac{\beta}{2} |\docs| \Bigl(
            d \log\sigma^2 - d - \log |\Sigma|
            + \text{tr}\bigl(\Sigma\bigr) / \sigma^2
        \Bigr)
    \right\},
    }_\text{constant} \\
    &\propto 
    \exp\left\{
        -\frac{\beta}{2\sigma^2}
        \sum_{\doc\in\docs} \lVert \docv - \mu_{f(\doc)} \rVert^2
    \right\} \\
    &=
    \exp\left\{
        -\frac{\beta}{2\sigma^2}
        \sum_{\idx\in\alphbet} \sum_{\doc\in f^{-1}(\idx)}
        \lVert \docv - \mu_\idx \rVert^2
    \right\}.
    \label{eq:simp2}
\end{align}

Since $\beta$ and $\sigma$ are constants, maximizing
$p(\docs,\queries|f)$ equates to minimizing
$\sum_{\idx\in\alphbet} \sum_{\doc\in f^{-1}(\idx)}
\lVert \docv - \mu_\idx \rVert^2$.
At the maximum, $\mu_\idx$ corresponds to the centroid
$\bar{\docv}_\idx$. This directly aligns with the $k$-means objective
in Formula (\ref{eq:kmeans}) and the selection of centroid as the
cluster center in Formula (\ref{eq:centroid}).

Therefore, any optimal solution $f^*$ to Formula (\ref{eq:kmeans})
also optimally maximizes the likelihood $p(\docs,\queries|f)$.

\end{proof}

The effectiveness of $k$-means in HKmI or BMI is validated through
Proposition \ref{thm:kmeans}. In HKmI, assuming $p(\qvar|\doc)\sim
N(\mu_\doc, \Sigma)$, where $\mu_\doc$ represents the BERT vector
representation for the document, aligns with maximizing
$p(\docs,\queries|f)$. For BMI, where the assumption is
$p(\qvar|\doc)\sim N(\mu_{\qvar|\doc}, \Sigma)$, we also achieve
maximization of $p(\docs,\queries|f)$.

However, it is a simplification in HKmI to assume $p(\qvar|\doc)\sim
N(\mu_\doc, \Sigma)$ as $\mu_\doc$ describes the document, not
necessarily its query distribution. In contrast, BMI's assumption of
$p(\qvar|\doc)\sim N(\mu_{\qvar|\doc}, \Sigma)$ is more suitable.

\subsection{
Average of Query Vectors as A Maximum-Likelihood Estimator for
Gaussian $p(\qvar|\doc)$ Mean}
\label{sec:mle}

\blue{
In Section \ref{sec:method}, we used the following approximation
\begin{equation}
    \mu_{\qvar|\doc} \approx \text{mean}(\text{BERT}(\queries_\doc)).
\end{equation}
Here, $\mu_{\qvar|\doc}$ represents the mean vector of the
distribution $p(\qvar|\doc)$ which we assumed to be Gaussian.
The expression
$\text{BERT}(\queries_\doc)=\{\text{BERT}(\query):
\query\in\queries_\doc\}$ denotes the set of BERT-generated
query vector embeddings associated with the gold-standard document
$\doc$. This section elaborates on the rationale behind the
approximation, demonstrating that
$\text{mean}(\text{BERT}(\queries_\doc))$ serves as a
maximum-likelihood estimator for $\mu_{\qvar|\doc}$.

To demonstrate that $\text{mean}(\text{BERT}(\queries_\doc))$ is a
maximum-likelihood estimator for $\mu_{\qvar|\doc}$, it is necessary
to assume that the queries in $\queries_\doc$ represent independent
samples drawn from $p(\qvar|\doc)$. $p(\qvar|\doc)$ is a Gaussian
distribution with mean $\mu_{\qvar|\doc}$ and covariance matrix
$\Sigma$ that is unknown. Then, the likelihood function
$p(\queries_\doc|\mu_{\qvar|\doc})$ can be formulated:
\begin{align}
p(\queries_\doc|\mu)
&= \prod_{\query\in\queries_\doc} p(\query|\mu) \\
&\propto \exp \left(
    -\frac{1}{2}
    \sum_{\query\in\queries_\doc}
    (\query - \mu)^\top \Sigma^{-1}
    (\query - \mu)
\right) \\
&=\exp -\frac{1}{2} \Biggl(
    \underbrace{
    |\queries_\doc| \mu^\top\Sigma^{-1}\mu
    - 2\mu^\top\Sigma^{-1}\sum_{\query\in\queries_\doc} \query 
    }_{=:g(\mu)}
    + \sum_{\query\in\queries_\doc} \query^\top \Sigma^{-1}\query
\Biggr),
\label{eq:gmu}
\end{align}
where $|\queries_\doc|$ denotes the number of queries. Here, $\query$
represents the BERT-generated vector embedding of a query.

$g(\mu)$ in Formula (\ref{eq:gmu}) is a quadratic function, and its
derivative is as follows:
\begin{equation}
\frac{\mathrm{d}g(\mu)}{\mathrm{d}\mu}
= 2\Sigma^{-1} \Biggl(
    |\queries_\doc|\mu - \sum_{\query\in\queries_\doc} \query
\Biggr)
\end{equation}

Upon solving $\mathrm{d}g(\mu)/\mathrm{d}\mu=0$, we derive the
maximum-likelihood estimator for $\mu_{\qvar|\doc}$ as:
\begin{equation}
    \hat{\mu}_{\qvar|\doc} =
    \frac{1}{|\queries_\doc|} \sum_{\query\in\queries_\doc} \query,
\end{equation}
which precisely corresponds to $\text{mean}(\text{BERT}(\query))$.
This outcome affirms that taking the average of BERT embeddings for
the set of queries $\queries_\doc$ associated with a document $\doc$
provide the most probable estimation of $\mu_{\qvar|\doc}$, under the
assumed Gaussian distribution of queries conditioned on documents.

Therefore, the accuracy of this approximation relies on the degree to
which the queries in $\queries_\doc$ adhere to the independent and
identically distributed (i.i.d.) assumption underlying
$p(\qvar|\doc)$. As demonstrated in Appendix \ref{sec:finetuning},
optimizing the GenQ queries through fine-tuning the document-to-query
model (which estimates $p(\qvar|\doc)$) results in enhanced retrieval
performance. This underscores the practical relevance of our
theoretical framework in actual applications.
}

\subsection{Estimation of $I(X;T)$ and $I(X;Q|T)$}
\label{sec:ixt}
Among the three variables, $\idvar$ is a discrete variable, while $\dvar$ and $\qvar$ are continuous variables in Euclidean spaces and do not follow standard distributions. In this paper, we estimate mutual information values via the empirical distributions $\hat{\dvar}$ and $\hat{\qvar}$, which are uniform over $\docs$ and $\queries$, respectively.

In other words,
\begin{align}
    I(\dvar;\idvar) &= \lim_{|\docs|\to\infty}
    \frac{1}{|\docs|} \sum_{\hat{\dvar}\in\docs} \log\frac{p(\hat{\dvar}|\hat{\idvar})}{1/|\docs|}, \\
    I(\dvar;\qvar|\idvar)
    &=\lim_{|\docs|\to\infty} \frac{1}{|\docs|}
    \sum_{\hat{\dvar}\in\docs}
    \sum_{\hat{\qvar}\in\queries}
    \log \frac{p(\hat{\dvar}|\hat{\qvar})}{p(\hat{\idvar}|\hat{\qvar})}
    \frac{1 / |\ids|}{1 / |\docs|}.
\end{align}
Here, $p(\hat{\dvar}|\hat{\idvar})$ is determined by the indexing. In contrast, $p(\hat{\dvar}|\hat{\qvar})$ is unrelated to the indexing and is a degenerate distribution: it has a probability mass of 1 if and only if $\hat{\dvar}$ is the gold-standard document for query $\qvar$. We estimate $p(\hat{\idvar}|\hat{\qvar})$ with a transformer neural network after fitting to the dataset.

The estimation of $p(\hat{\dvar}|\hat{\idvar})$ is not straightforward. In standard GDR, a document is assigned a distinct ID string, and $p(\hat{\dvar}|\hat{\idvar})$ is a degenerate distribution. In this case, $I(\dvar;\idvar)$ reaches its maximum, i.e., $H(\dvar)=\log |\docs|$, which corresponds to the rightmost points of the IB curves shown in Figure \ref{fig:bottleneck-curve}, where $\beta=\infty$.

To obtain the rest of the IB curves ($\beta<\infty$) in Figure
\ref{fig:bottleneck-curve}, we used a generalized GDR task that
permitted ID collisions. This was implementing by training the
transformer neural network to predict prefixes of a document ID string
$[\idx_1,\cdots,\idx_{l}]$, instead of the whole string
$[\idx_1,\cdots,\idx_{m}]$, where $l<m$. An ID prefix thus corresponds
to multiple documents. In this setting,
\begin{equation}
    p(\hat{\dvar}|\hat{\idvar}=\idx) = \left\{\begin{matrix}
        1/|\docs_\idx^l| & \hat{x}\in \docs_\idx^l \\
        0 & \text{otherwise}
    \end{matrix}\right.,
\end{equation}
where $\docs_\idx^l\subset\docs$ is the set of documents for which the
ID prefix is identical to the first $l$ digits of $\idx$.

\vfill

\section{More Results on Information Bottleneck Curves}
\label{sec:ib-plane-extra}

\begin{figure}[h]
\centering
\includegraphics[width=0.35\linewidth]{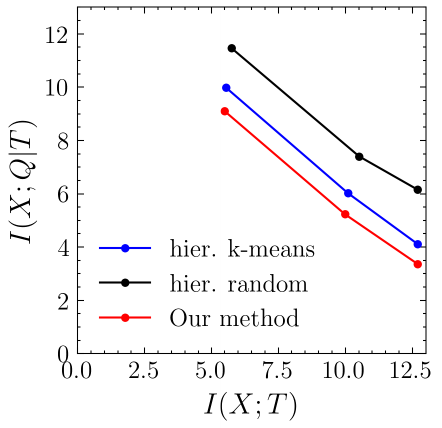}
\caption{ Information bottleneck curves for different indexing methods
    on the NQ320K training set with the T5-mini model. }
\label{fig:ibplane-mini-marco}
\end{figure}

Figure \ref{fig:ibplane-mini-marco} compares our proposed method (red)
with two other indexing methods on the information bottleneck plane,
as measured with the T5-mini model on the NQ320K training set. Our
method's curve is consistently located to the lower left of the other
two methods' curves. This suggests that our BMI method reduces the
bottleneck.

\vfill

\section{Details of Finetuning \texttt{docT5query}}
\label{sec:finetuning}

\blue{
The \texttt{docT5query} model underwent fine-tuning using the standard
next-word prediction task and a cross-entropy loss function.
Specifically, given a query $\query$ and its corresponding
gold-standard document $\doc$, the model parameters were adjusted to
maximize the likelihood $p(\query|\doc)=\prod_{i=1}^{|\query|-1}
p(\query_i|\doc;\query_1,\cdots,\query_{i-1})$, where $\query_i$
denotes the $i$-th word in the query text, and $|\query|$ is the query
length. Updates to the parameters were implemented using the AdamW
optimizer \citep{loshchilov2017decoupled}, with $\beta_1=0.9$,
$\beta_2=0.999$, $\text{eps}=10^{-8}$, and a weight decay of 0.01. The
learning rate was set at $5\times 10^{-5}$. This finetuning process
was executed 10 epochs on the training set of NQ320K.
}

\vfill

\section{Visualized Comparison of $\mu_\doc$ and
$\mu_{\qvar|\doc}$}
\label{sec:dimreduce}

\begin{figure}[h]
    \centering
    \begin{minipage}[t]{0.35\linewidth}
        \centering
        \includegraphics[width=\linewidth]{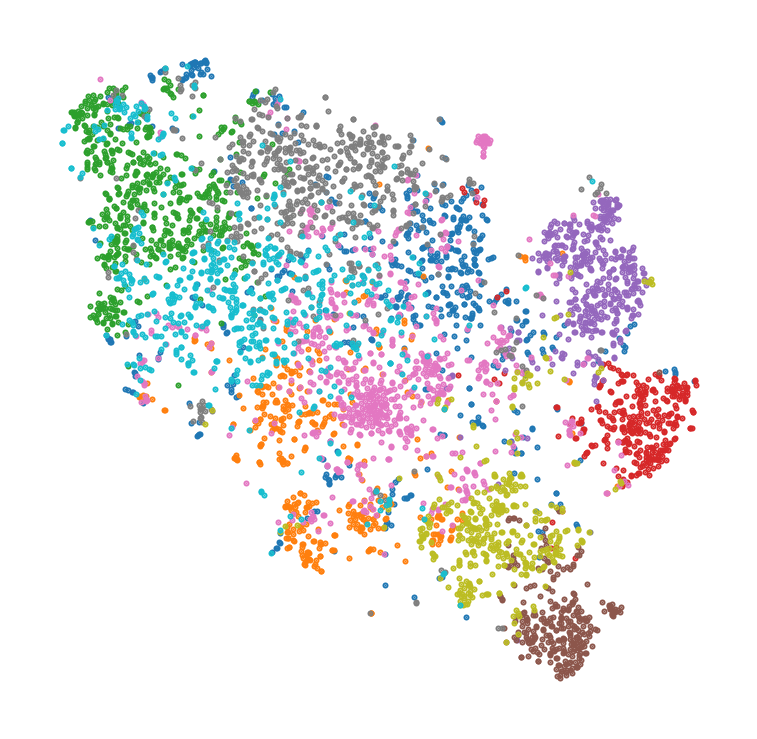}
 (a) Doc
    \end{minipage}
    \begin{minipage}[t]{0.35\linewidth}
        \centering
        \includegraphics[width=\linewidth]{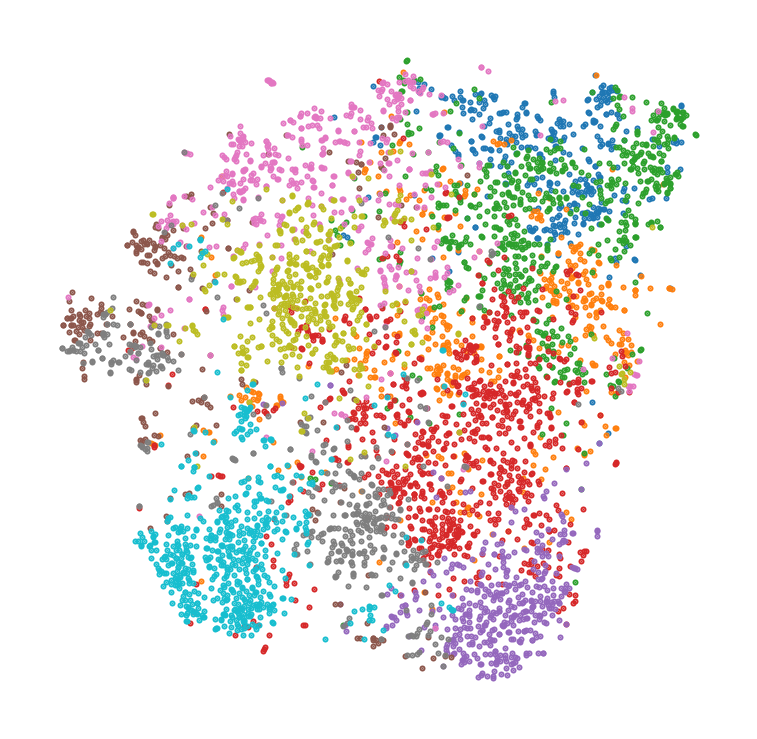}
 (b) GenQ
    \end{minipage} \\
    \begin{minipage}[t]{0.35\linewidth}
        \centering
        \includegraphics[width=\linewidth]{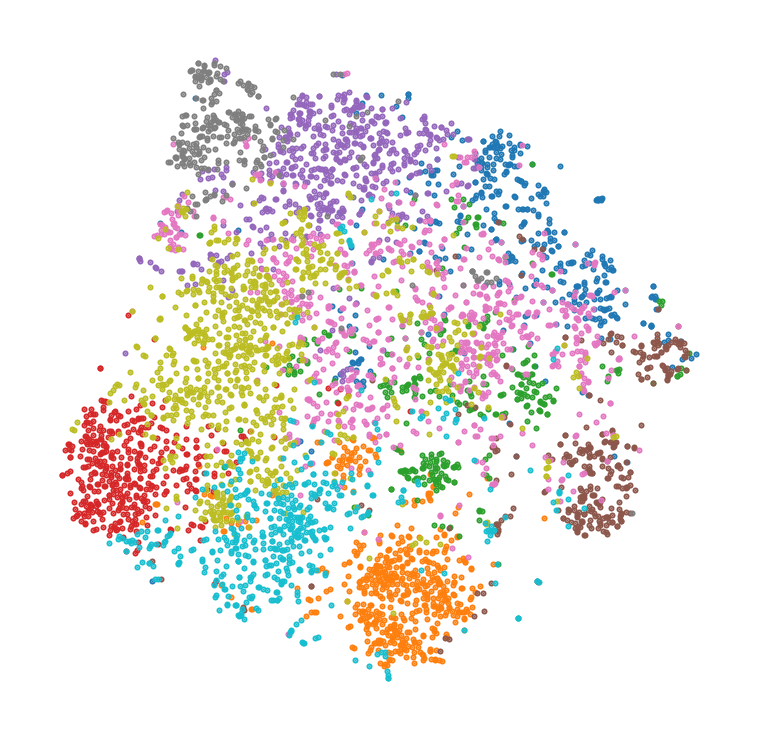}
 (c) GenQ+RealQ
    \end{minipage}
    \begin{minipage}[t]{0.35\linewidth}
        \centering
        \includegraphics[width=\linewidth]{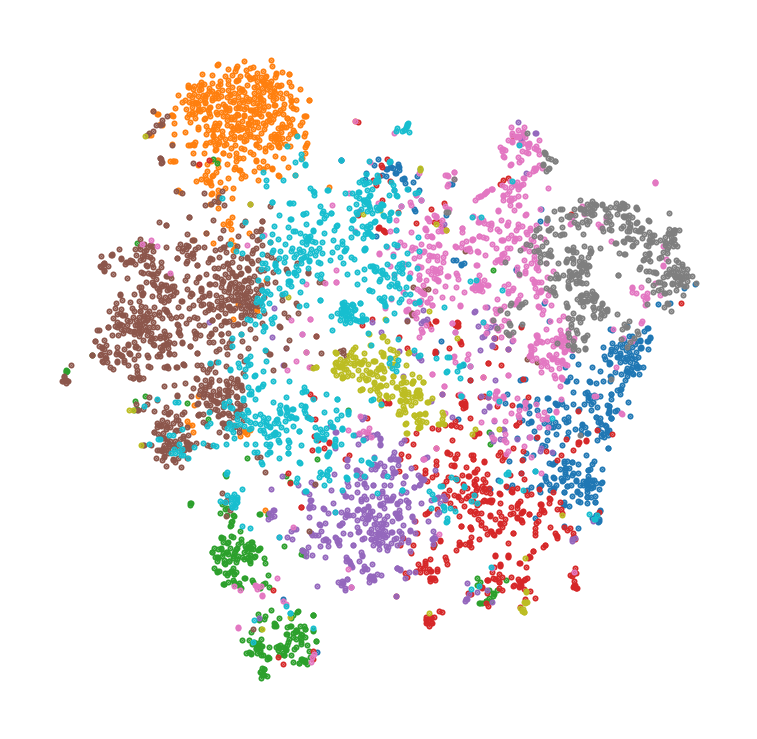}
 (d) GenQ+RealQ+DocSeg
    \end{minipage} \\
    \begin{minipage}[t]{0.35\linewidth}
        \centering
        \includegraphics[width=\linewidth]{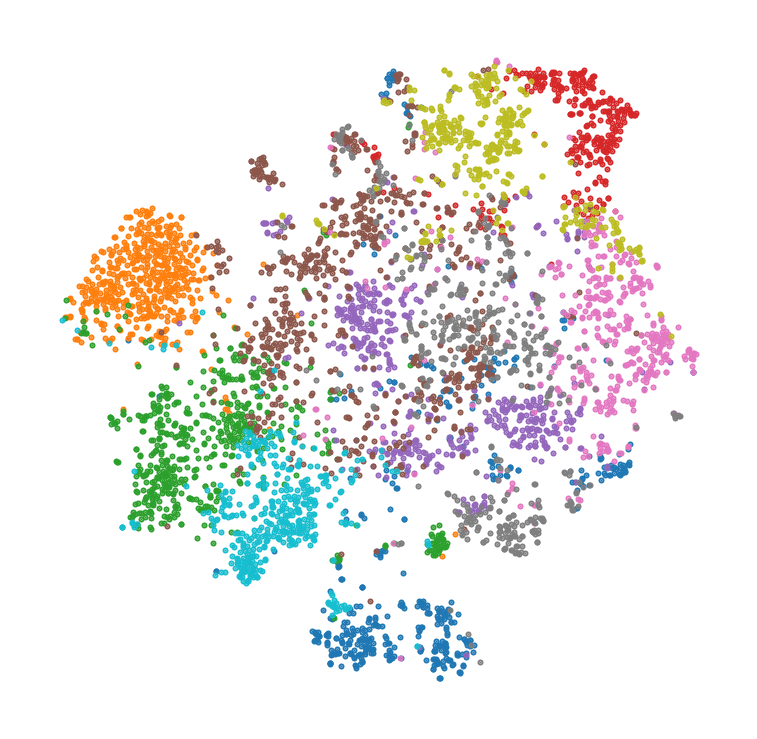}
 (e) GenQ (finetuned) + RealQ + DocSeg
    \end{minipage}
\caption{Visualization of (a) $\mu_{\doc}$ and (b-d)
$\mu_{\qvar|\doc}$ obtained with 5,000 documents in the NQ320K
dataset. The colors indicate different clusters assigned by $k$-means
clustering in the original high-dimensional space ($k=10$). Every
point represents a document and was obtained as (a) the document's
BERT embedding, (b) the mean BERT embedding of GenQ queries, (c) the
mean BERT embedding of GenQ+RealQ queries, (d\blue{-e}) the mean BERT
embedding of all queries (GenQ+RealQ+DocSeg). \blue{The GenQ queries
used to produce (e) were generated with a finetuned version of the
model on NQ320K, as detailed in Appendix \ref{sec:finetuning},
whereas those for (b-d) were not finetuned.} }
\label{fig:mu}
\end{figure}

Figure \ref{fig:mu} shows a visualization, using the t-SNE method
\citep{maaten2008tsne}, of the vectors on which we applied the
hierarchical $k$-means algorithm to obtain the indexing. 5,000
documents were randomly selected and are each represented by a point
in the graphs. Figures \ref{fig:mu}(a-d) respectively correspond to
the four rows in each block of Table \ref{tbl:ablation}. (a) was
obtained by using the documents' BERT embeddings, i.e., $\{\mu_\doc:
\doc\in\docs\}$, and (b-d) were generated by using the mean vectors of
queries, i.e.,
$\{\mu_{\qvar|\doc}=\text{mean}(\text{BERT}(\queries_\doc)):
\doc\in\docs\}$ where $\queries_\doc$ consists of GenQ queries,
GenQ+RealQ queries, and all queries (i.e., GenQ+RealQ+DocSeg) for
(b-d), respectively. \blue{In (e), we report the result of (d) when
GenQ queries were produced with the finetuned \texttt{docT5query}
model using the training set of NQ320K.}

Documents that were categorized in the same cluster are indicated by
the same color. The cluster borders are more evident in (d\blue{-e}) than in
(a-c). This visualization suggests that $\mu_{\qvar|\doc}$ better
represents ``semantics'' in the ID prefixes than $\mu_{\doc}$ does,
which explains why our method achieved better retrieval performance.

\end{document}